\documentclass[a4paper, 11pt, UKenglish]{article}
\usepackage[
  backend=biber,
  doi=true,
  url=false,
  isbn=false,
  eprint=true
]{biblatex}
\usepackage{hyperref}
\hypersetup{
    colorlinks,
    linkcolor={blue!50!black},
    citecolor={black},
    urlcolor={blue!80!black}
}

\usepackage{babel}
\usepackage[utf8]{inputenc}
\usepackage[T1]{fontenc}
\usepackage{geometry}

\usepackage{amsmath, amsthm, amssymb}
\usepackage{enumitem}
\usepackage{todonotes}
\usepackage{thmtools}
\usepackage{thm-restate}
\usepackage{mathtools}
\usepackage[capitalise]{cleveref}
\usepackage{xspace}
\usepackage{setspace}
\usepackage{tikz}
\usepackage{multicol}
\usepackage{comment}

\declaretheorem[name=Theorem, numberwithin=section]{theorem}
\declaretheorem[name=Definition, style=definition, sibling=theorem]{definition}
\declaretheorem[name=Lemma, sibling=theorem]{lemma}

\declaretheorem[name=Corollary, sibling=theorem]{corollary}

\declaretheorem[name=Question, sibling=theorem]{question}

\newcommand{\mainonly}[1]{}

\crefname{claim}{Claim}{Claims}
\crefname{case}{case}{cases}
\crefname{prop}{Property}{Properties}

\newcommand{\set}[1]{\{#1\}}
\newcommand{\setof}[2]{\{#1\mid #2\}}

\renewcommand{\subset}{\subseteq}

\newcommand{\from}{\colon}

\newcommand{\C}{\mathcal{C}}

\renewcommand{\P}{\mathcal{P}}

\renewcommand{\O}{\mathcal{O}}

\newcommand{\N}{\mathbb{N}}
\DeclareMathOperator{\mw}{mw}
\DeclareMathOperator{\tww}{tww}

\newcommand{\dist}{\mathrm{dist}}

\newcommand{\Ball}{\mathrm{Ball}}
\newcommand{\VC}{\mathrm{VCdim}}

\let\le\leqslant

\let\leq\leqslant
\let\geq\geqslant

\let\emptyset\varnothing

\let\preceq\preccurlyeq   % 

\def\MIIS{\textsc{Max Dist-$2$ Independent Set}\xspace}
\def\MDS{\textsc{Min Dominating Set}\xspace}
\def\rMIIS{\textsc{Max Dist-$2r$ Independent Set}\xspace}
\def\rMDS{\textsc{Min Dist-$r$ Dominating Set}\xspace}
\def\ii{2-independence\xspace}
\def\iis{distance-$2$ independent set\xspace}
\def\diratio{domination-to-2-independence ratio\xspace}

\title{Constant-factor approximation of maximum distance-2 independent set in graphs of bounded merge-width}

\author{Maël Dumas\footnotemark \vspace{0.5cm}  \\ 
University of Warsaw, Institute of Informatics, Poland}

\date{}

\begin{document}
\maketitle
\begin{abstract}
    We give a constant-factor approximation algorithm for \MIIS in graphs of bounded radius-$2$ merge-width.
    The same result holds for \MDS from [Bonamy and Geniet, 2025], [Chan et al., SODA  '12].
    Both approximation algorithms are LP-based, showing that the \diratio is bounded in graphs of bounded radius-$2$ merge-width. Moreover, this result is tight in the sense that the ratio can be unbounded in graphs of bounded radius-$1$ merge-width.
\end{abstract}

\renewcommand{\thefootnote}{\fnsymbol{footnote}}
\footnotetext{\footnotemark
Maël Dumas was supported by the ERC project BUKA (n° 101126229) (during his employment in BUKA) and project BOBR (during his employment in BOBR) that received funding from the European Research Council (ERC) under the European Union’s Horizon 2020 research and innovation programme (grant agreement No. 948057).}
\renewcommand{\thefootnote}{\arabic{footnote}}
\setcounter{footnote}{0}

% \paragraph*{Acknowledgement} Maël Dumas was supported by the ERC project BUKA (n° 101126229) (during his employment in BUKA) and project BOBR (during his employment in BOBR) that received funding from the European Research Council (ERC) under the European Union’s Horizon 2020 research and innovation programme (grant agreement No. 948057).

\section{Introduction}

For a graph $G$, a \iis is a set $I\subset V(G)$ whose vertices are at pairwise distance more than $2$, or equivalently, that the closed neighbourhoods of vertices in $I$ are pairwise disjoint.
The maximum size of such a set is the \emph{\ii number} $\alpha_2(G)$, and the corresponding optimisation problem is \MIIS.
Dually, the \MDS{} problem asks for a minimum-size set $D \subseteq V(G)$ such that every vertex of $G$ is in $D$ or adjacent to a vertex of $D$. The minimum size of such a set is the \emph{domination number} $\gamma(G)$. These two values are related by the inequality $\alpha_2(G) \leq \gamma(G)$.

% The size of a maximum distance-$2$ independent set is denoted $\alpha_2(G)$ and the . 
% The minimum dominating set (\MDS) problem ask for a minimum-size set $D\subset V(G)$ of vertices such that every vertex of $G$ is either in $D$ or adjacent to a vertex in $D$, the size of a solution to this problem is the dominating number of $G$, denoted $\gamma(G)$.
% It is well known that $\alpha_2(G) \leq \gamma(G)$ and that these problems have natural integer linear programs  formulation such that their relaxation are dual.

Both problems are NP-complete: \textsc{Set Cover} reduces to \MDS{} and \textsc{Set Packing} reduces to \MIIS{}, and both are among Karp's 21 NP-complete problems~\cite{Karp72}.
On general graphs, \MDS{} is approximable with factor $\ln n$~\cite{Johnson74,Lovasz75} and this is essentially tight unless $\mathrm{P} = \mathrm{NP}$~\cite{DinurS14}.
The problem \MIIS{} is NP-hard to approximate within $n^{1/2-\varepsilon}$ for any $\varepsilon > 0$, even in bipartite and chordal graphs~\cite{EtoGM14}, and is APX-hard in cubic graphs, though it admits a PTAS on planar graphs~\cite{EtoILM22}.
 
On restricted graph classes, however, constant-factor approximations in polynomial-time become attainable.
For \MDS{}, there are constant-factor approximations for graph classes of bounded degeneracy~\cite{LenzenW10,BansalU17} and of linear neighbourhood complexity~\cite{CGKS12}. The latter encompass a broad range of graph classes, such as planar, bounded degree, bounded expansion and bounded twin-width.
For \MIIS{}, there is are constant-factor approximations for cubic graphs~\cite{EtoILM17}, graph classes of bounded expansion~\cite{Dvorak13} %, more precisely, for graph classes of bounded weak colouring number $2$,  
and bounded twin-width~\cite{BGKTW24}. Note that for the latter, the constant-factor approximation requires that a decomposition of small twin-width is given as part of the input, however such decomposition is not known to be computable in polynomial-time.

Merge-width is a family of graph parameters recently introduced by Dreier and Toru\'{n}czyk~\cite{merge-width}. In this paper we are specifically interested in graph classes of bounded radius-$2$ merge-width, which include graphs classes such as planar, bounded degree, bounded expansion and bounded twin-width.

Notably, it was shown by Bonamy and Geniet~\cite{bonamy2025chiboundedness} that graph classes of bounded radius-$2$ merge-width have linear neighbourhood complexity, hence \MDS admits constant-factor approximation in these classes by the previously mentioned result of Chan et al.~\cite{CGKS12}.

\paragraph*{Contributions} The main result is the following. 

\begin{theorem}\label{thm:main}
    \MIIS{} admits a constant-factor approximation in graphs of bounded radius-$2$ merge-width.
\end{theorem}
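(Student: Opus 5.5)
The algorithm will be LP-based, following the abstract. Write the standard LP relaxation of \MDS,
\[
\min \sum_{v} x_v \quad \text{s.t.} \quad \sum_{u\in N[v]} x_u \ge 1 \ \ \forall v,\qquad x\ge 0,
\]
and its dual, which is the fractional relaxation of \MIIS,
\[
\max \sum_v y_v \quad \text{s.t.} \quad \sum_{u\in N[v]} y_u \le 1 \ \ \forall v,\qquad y\ge 0.
\]
Denote the common optimum by $\gamma^*(G)$. Then $\alpha_2(G)\le \gamma^*(G)\le \gamma(G)$.

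By Bonamy and Geniet, classes of bounded radius-$2$ merge-width have linear neighbourhood complexity. Chan et al.\ show that in such classes the integrality gap of the domination LP is bounded, via a rounding that we can run in polynomial time. So $\gamma(G)\le c\,\gamma^*(G)$ for a constant $c$ depending only on the class. It therefore suffices to prove the complementary bound $\gamma^*(G)\le c'\,\alpha_2(G)$, or better the stronger statement $\gamma(G)\le c'\,\alpha_2(G)$, which says the \diratio is bounded. The proof of that bound should also be constructive, yielding a \iis of size at least $\gamma^*(G)/c''$.

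My plan for the key inequality is a greedy/charging argument driven by a merge-width construction sequence. Radius-$2$ merge-width controls how the closed neighbourhoods $N[v]$ interact, since two vertices at distance at most $2$ are exactly those with intersecting closed neighbourhoods. I would process the vertices and greedily build a \iis $I$: repeatedly pick a vertex $v$ that is minimal in some order derived from the construction sequence, add it to $I$, and delete $N^2[v]$, the ball of radius $2$ around $v$. Each selected vertex $v$ is then charged for the vertices of $N^2[v]$. The goal is to show that $N^2[v]$ can be dominated by a bounded number of vertices, or at least that the LP mass $\sum_{u\in N^2[v]} y_u$ removed per step is bounded by a constant.

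An alternative route uses weak-colouring-type orderings. Dvořák's argument for bounded expansion works with an ordering whose weak $2$-reachability is bounded. Merge-width comes with analogous radius-$r$ reachability bounds, defined through the construction sequence rather than through a single ordering. Dvořák's argument would then transfer as follows: for each vertex, take as a dominator its minimal weakly $2$-reachable vertex, and bound the number of dominators in terms of $\alpha_2$ via bounded reachability. The hard part is that merge-width allows dense graphs, so reachability is measured only with respect to "resolved" edges at each time step. The step I expect to be the main obstacle is this: showing that a single neighbourhood $N[v]$ of the greedy vertex, or the set of vertices charged to it, is covered by a constant number of closed neighbourhoods. This is where radius $2$, rather than radius $1$, must be used, which is consistent with the tightness claim in the abstract. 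I would try first to show that at the time the last edge involving $v$ is resolved, the vertices at distance $2$ through resolved edges fall into boundedly many parts, and that each part is dominated by one vertex.

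Finally, the LP step makes this algorithmic without needing the decomposition. Solve the fractional \MIIS LP, then round with the constructive bound. Output the greedy \iis, which has size at least $\gamma^*(G)/c'$ and hence at least $\alpha_2(G)/c'$. The output must be computed without knowing the merge-width sequence. For that, I would make the greedy rule depend only on the LP solution, for instance always picking a vertex minimizing the fractional weight of its $2$-ball, and use merge-width only in the analysis.
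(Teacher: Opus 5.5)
Your final paragraph arrives at the paper's algorithm: solve the fractional \MIIS{} LP, repeatedly take a vertex $x$ of positive residual weight minimising $w^*(\Ball^2_G(x))$, and zero out that ball. You also correctly reduce everything to showing that for every fractional \ii weight function some vertex has $w^*(\Ball^2_G(x))=O(1)$; the residual function is again such a function, and the detour through Chan et al.\ and $\gamma$ is unnecessary. But that lemma is the whole content of the theorem, and your proposal leaves it unproved. The mechanisms you sketch would not establish it:
\begin{itemize}
\item A weight-oblivious choice of vertex or time step (``when the last edge at $v$ is resolved'') gives no control on the weight of the parts met by $\Ball^2_{R_{t+1}}(v)$.
\item Parts of $\P_t$ are in general not dominated by a single vertex.
\item Most importantly, you say nothing about vertices reached through \emph{unresolved} edges, i.e.\ through homogeneous pairs of parts. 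This is exactly where dense merge-width classes differ from bounded expansion, so a Dvo\v{r}\'{a}k-style transfer using reachability along resolved edges alone cannot cover $\Ball^2_G(x)$.
\end{itemize}

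The missing ideas are as follows.
\begin{enumerate}
\item \textbf{Choose the time step by the weights.} In a merge sequence forming a maximal chain, let $i$ be the first index where some part $X\in\P_i$ has $w^*(X)>d$. Here $d$ is the order of duality of $G$, and $d=O(\VC(G))=O(\mw_1(G))$. Then every part of $\P_i$ has weight at most $2d$, so for $x\in X$ the ball $\Ball^2_{R_{i+1}}(x)$ has weight $O(kd)$.
\item \textbf{Resolved edge then unresolved edge.} Take $u\notin\Ball^2_{R_{i+1}}(x)$ adjacent to some $v\in N_{R_{i+1}}(x)$. By homogeneity modulo $R_{i+1}$, $u$ is complete to $P\cap N_{R_{i+1}}(x)$, where $P\ni v$ is a part of $\P_i$. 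Hence all such $u$ lie in $N_G(p)$ for a single $p$, costing at most $1$ per part and at most $k$ overall.
\item \textbf{Unresolved neighbours $X_B$ of $x$.} Each closed neighbourhood has weight at most $1$ and $w^*(X)>d$, so no $d$ vertices dominate $X$. Duality therefore gives $S\subseteq X$ with $|S|\le d$ such that every $u\in X_B$ has a non-neighbour in $S$. Since $ux$ is an unresolved edge, homogeneity of the pair (part of $u$, $X$) forces these non-edges into $R_{i+1}$. Thus $X_B\subseteq N_{R_{i+1}}(S)$, and $X_B$ together with its resolved neighbours lies in $\Ball^2_{R_{i+1}}(S)$, which meets at most $kd$ parts. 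The remaining unresolved second neighbours are handled as in step 2.
\end{enumerate}
This yields $w^*(\Ball^2_G(x))=O(kd^2)$. Step 3 is the idea your plan lacks: the weight threshold plus bounded duality converts unresolved adjacency into resolved reachability.
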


See \cref{thm:miis_approx} for explicit bounds. Our approximation is simple greedy LP-relaxation-based algorithm, which  pick vertices whose radius-$2$ ball in the graph minimize the weight of a fractional \iis. 
The main technical lemma is that 
the weight of these balls are bounded by a function of the radius-$2$ merge width.
Additionally, we provide a tighter bound for bounded twin-width, improving upon the result of Bonnet et al.~\cite{BGKTW24} (and providing a truly polynomial-time constant-factor approximation).

\paragraph*{Related work}
The relaxation of the integer linear program formulations of the two problems are LP-duals of each other, that is $\gamma^*(G)= \alpha_2^*(G)$.
A natural question is to ask is if ratio $\gamma/\alpha_2$, called the \emph{\diratio}, is small, and has been widely investigated. The ratio is bounded in many graph classes such as asteroidal triple-free~\cite{BCGY25}, bounded expansion~\cite{Dvorak13,Dvorak19} or bounded twin-width~\cite{BGKTW24}; and unbounded in bipartite 3-degenerate graphs~\cite{Dvorak13} and split~\cite{BCGY25}.

Both the constant-factor approximation of Chan et al.~\cite{CGKS12} for \MDS and ours for \MIIS are LP-based, in particular, the integrality gap of these problems are bounded. Since the LP relaxation of these problems are dual, it shows that the \emph{\diratio} is bounded in graphs of bounded radius-$2$ merge-width.

Natural generalisations of the considered problems are \rMDS and \rMIIS. They also admit constant-factor approximations in graphs of bounded merge-width since these graph classes are closed under taking power. Note however that the radius of the merge-width may depend on $r$, see \cref{sec:concl} for more details.

\section{Preliminaries} \label{sec:prelim}

\paragraph*{Graphs}
%For two sets $A,B$, we write $AB \coloneqq\setof{ab}{a\in A,b\in B,a\neq b}$, where $ab$ denotes the unordered pair $\{a,b\}$, and $\binom{A}{2} \coloneqq AA$.
Graphs are simple, undirected, and finite, that is, a graph $G$ consists of a finite set $V(G)$ of vertices and a set $E(G)\subseteq \binom{V(G)}{2}$ of edges.
The edge-complement of~$G$ is~$\overline{G}$, with $V(\overline{G})\coloneqq V(G)$ and $E(\overline{G}) \coloneqq \binom{V(G)}{2} \setminus E(G)$.

By $\dist_G(x,y)$ we denote the length of a shortest path between $x$ and~$y$ in $G$ %(or $\infty$ if no path exists)
, and denote $\Ball^r_G(v) \coloneqq\setof{w\in V(G)}{\dist_G(v,w)\le r}$. By $N_G(v)$ we denote the open neighbourhood of~$v$ in $G$, that is, the set of vertices adjacent to $v$; and by $N_G(X)$ we denote the set of vertices adjacent to $X\subseteq V(G)$, excluding $X$ itself.

For two sets $A,B$, we write $AB \coloneqq\setof{ab}{a\in A,b\in B,a\neq b}$, where $ab$ denotes the unordered pair $\{a,b\}$. Two sets $A,B\subseteq V(G)$ are \emph{complete} if $AB\subseteq E(G)$, and \emph{anticomplete} if $AB\cap E(G)=\emptyset$.
The \emph{restriction} of a partition $\P$ of $V(G)$ to a set $S\subseteq V(G)$ is defined as $$\P\restriction_S = \{P \cap S \mid P\in \P\}\setminus \{\emptyset\}.$$

\paragraph*{Merge-width}
Let~$\P$ be a partition of the vertices $V$ of a graph~$G$, and $R \subseteq \binom{V}{2}$ be a set of pairs of vertices of~$G$.
By $\dist_R(x,y)$ and $\Ball_R^r(x)$ we mean the corresponding notions in the graph $(V,R)$.
We say that~$\P$ is \emph{homogeneous modulo $R$} (in $G$) if for any parts $A,B \in \P$ (possibly $A=B$), either all edges or all non-edges are in $R$, that is for every pairs $ab,a'b'\in AB\setminus R$, we have that $ab \in E(G)$ if and only if $a'b' \in E(G)$.
The \emph{radius-$r$ width} of $(\P,R)$~is
\[ \max_{v \in V} |\P\restriction_{\Ball^r_R(v)}|. \]

\begin{definition}[{\cite[Def. 3.1]{merge-width}}]
	A \emph{merge sequence} for a graph~$G$ is a sequence $$(\P_1,R_1),\dots,(\P_m,R_m)$$ where
	\begin{enumerate}
		\item $\P_1\preccurlyeq \P_2\preccurlyeq\ldots\preccurlyeq \P_m$ is a sequence of ever coarser partitions of $V(G)$ with~$\P_1$ the partition into singletons and $\P_m$ the partition with one part, 
		\item $R_1 \subseteq \dots \subseteq R_m \subseteq \binom{V(G)}{2}$ is a monotone sequence of set of pairs of vertices, and
		\item $\P_t$ is homogeneous modulo $R_t$, for $t=1,\ldots,m$. 
	\end{enumerate}
	The \emph{radius-$r$ width} of this merge sequence is the maximum radius-$r$ width of $(\P_{t},R_{t+1})$,
	for  $1 \le t < m$.
	Finally, the \emph{radius-$r$ merge-width} of~$G$, denoted by~$\mw_r(G)$, is the minimum radius-$r$ width of a merge sequence for~$G$.
\end{definition}
Note that the mismatched indices in $(\P_t,R_{t+1})$ are intentional, and forbid one from merging many parts and adding many resolved pairs all at once when going from step~$t$ to~$t+1$.
A graph class~$\C$ has \emph{bounded merge-width} there is a function $f$ such that $\mw_r(G)<f(r)$  for every $r\in\N$ and $G\in \C$.

In the definition of merge-width, the sequence of partitions $\P_1\preceq\ldots\preceq \P_m$ may be equivalently required to be a \emph{maximal chain of partitions} of $V(G)$ (i.e. $|\P_i| = |\P_{i+1}|+1$).
Indeed, given any merge sequence $(\P_1,R_1),\dots,(\P_m,R_m)$,
one can transform it into a merge sequence whose sequence of partitions is a maximal chain of partitions as follows: if $\P_i=\P_{i+1}$ for some $1\le i<m$, then we can drop the pair $(\P_{i+1},R_{i+1})$ in the sequence, and if $\P_i\prec \P\prec \P_{i+1}$, then we may insert the pair $(\P,R_{i+1})$ into the sequence. Those operations do not increase the radius-$r$ width of the merge sequence.

\paragraph*{VC-dimension and neighbourhood complexity}
A set~$A$ of vertices of a graph $G$ is \emph{shattered} 
if $\setof{N_G(v)\cap A}{v\in V(G)}=2^A$.
% that is, $N^A=2^A$.
The \emph{VC-dimension} of~$G$ is the maximum size of a shattered subset of $V(G)$.
The \emph{neighbourhood complexity function} (or \emph{shatter function}) is defined by:
$$ \pi_G(m) \coloneqq \max_{\substack{A\subseteq V(G),\\|A|\le m}}\left|\setof{N_G(v)\cap A}{v\in V(G)}\right|.$$
%This extends to a class~$\C$ of graphs as $\pi_\C(m) \coloneqq \max_{G \in \C} \pi_G(m)$.

Note the trivial bound $\pi_G(m) \le 2^m$. The fundamental Sauer-Shelah-Perles lemma \cite{sauer72,shelah72} states that this bound is polynomial in graphs of bounded VC-dimension.

\begin{lemma}[Sauer-Shelah-Perles lemma] \label{sauer_shelah_lemma}
	Let $G$ be a graph of VC-dimension~$d$. Then $$\pi_G(m) \le \O(m^d)\qquad\text{for all $m\in\N$.}$$
\end{lemma}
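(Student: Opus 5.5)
We prove the Sauer--Shelah--Perles bound in its set-system form: for a family $\mathcal{F}$ of subsets of a finite set $A$ with VC-dimension at most $d$ (no subset of $A$ of size $d+1$ is shattered by $\mathcal{F}$), we show
\[ |\mathcal{F}| \le \sum_{i=0}^{d}\binom{|A|}{i}. \]
The graph statement then follows at once. For $A\subseteq V(G)$ with $|A|\le m$, set $\mathcal{F}=\setof{N_G(v)\cap A}{v\in V(G)}$. A subset $B\subseteq A$ is shattered by $\mathcal{F}$ exactly when $\setof{N_G(v)\cap B}{v\in V(G)}=2^B$, which means $B$ is shattered in $G$, so $|B|\le d$. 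The right-hand side is at most $\sum_{i\le d} m^i\le (d+1)m^d=\O(m^d)$ for $m\ge1$, with $d$ fixed. Taking the maximum over $A$ gives the bound on $\pi_G(m)$.

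For the set-system inequality we use the Pajor strengthening: $|\mathcal{F}|\le|\mathrm{Sh}(\mathcal{F})|$, where $\mathrm{Sh}(\mathcal{F})$ is the family of subsets of $A$ shattered by $\mathcal{F}$. Every shattered set has size at most $d$, so $|\mathrm{Sh}(\mathcal{F})|\le\sum_{i\le d}\binom{|A|}{i}$. We prove $|\mathcal{F}|\le|\mathrm{Sh}(\mathcal{F})|$ by induction on $|A|$.

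The base case $A=\emptyset$ holds because $\mathcal{F}$ is $\emptyset$ or $\{\emptyset\}$, and in the latter case $\emptyset$ is shattered. For the inductive step, pick $x\in A$ and set $A'=A\setminus\{x\}$. Define
\[ \mathcal{F}_1=\setof{F\setminus\{x\}}{F\in\mathcal{F}}, \qquad \mathcal{F}_2=\setof{F\subseteq A'}{F\in\mathcal{F}\text{ and }F\cup\{x\}\in\mathcal{F}}. \]
Then $|\mathcal{F}|=|\mathcal{F}_1|+|\mathcal{F}_2|$: each member of $\mathcal{F}_1$ has one or two preimages in $\mathcal{F}$, and it has two exactly when it lies in $\mathcal{F}_2$. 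The induction hypothesis applies to both families on $A'$. Every set shattered by $\mathcal{F}_1$ is shattered by $\mathcal{F}$, since it avoids $x$. For $S$ shattered by $\mathcal{F}_2$, both $S$ and $S\cup\{x\}$ are shattered by $\mathcal{F}$, because each trace on $S$ is realised both with and without $x$.

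The main obstacle is checking that these contributions are not double counted. Map $\mathrm{Sh}(\mathcal{F}_1)$ into $\mathrm{Sh}(\mathcal{F})$ by the identity, and $\mathrm{Sh}(\mathcal{F}_2)$ into $\mathrm{Sh}(\mathcal{F})$ by $S\mapsto S\cup\{x\}$. Both maps are injective. Their images are disjoint, since the first consists of sets avoiding $x$ and the second of sets containing $x$. Therefore
\[ |\mathrm{Sh}(\mathcal{F})|\ge|\mathrm{Sh}(\mathcal{F}_1)|+|\mathrm{Sh}(\mathcal{F}_2)|\ge|\mathcal{F}_1|+|\mathcal{F}_2|=|\mathcal{F}|, \]
which completes the induction.
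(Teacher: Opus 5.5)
Your proof is correct and complete. The paper gives no proof of this lemma: it quotes it as the classical Sauer--Shelah--Perles result, with citations, and uses it only as background. So there is no argument in the paper to compare against.

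Your reduction from graphs to set systems is handled properly. Traces of $\mathcal{F}$ on $B\subseteq A$ coincide with traces of neighbourhoods on $B$, so every set shattered by $\mathcal{F}$ is shattered in $G$.

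Your route through Pajor's inequality $|\mathcal{F}|\le|\mathrm{Sh}(\mathcal{F})|$ is a clean alternative to the usual textbook proof. That proof uses the same split into $\mathcal{F}_1,\mathcal{F}_2$, but argues that $\mathcal{F}_2$ has VC-dimension at most $d-1$. It then runs a double induction on $|A|$ and $d$ via the Pascal-type recursion $\Phi_d(n)=\Phi_d(n-1)+\Phi_{d-1}(n-1)$. In your version the same observation is repackaged as the injection $S\mapsto S\cup\{x\}$ into $\mathrm{Sh}(\mathcal{F})$: if $S$ is shattered by $\mathcal{F}_2$, then $S\cup\{x\}$ is shattered by $\mathcal{F}$. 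This means only induction on $|A|$ is needed.

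Either way you get the explicit bound $\pi_G(m)\le\sum_{i\le d}\binom{m}{i}\le(d+1)m^d$, which is more than the paper requires. You are also right to restrict the last estimate to $m\ge 1$. For $m=0$ and $d\ge 1$, a nonempty $G$ has $\pi_G(0)=1$ while $m^d=0$, so the statement's ``$\O(m^d)$ for all $m\in\N$'' must be read for $m\ge1$. The binomial-sum form holds for every $m$.
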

For graphs of bounded radius-$2$ merge-width,
Bonamy and Geniet~\cite[Thm. 1.5]{bonamy2025chiboundedness}
proved that the neighbourhood complexity function is linear.

\begin{theorem}\label{thm:nbd_complexity}
	Any graph $G$ with $\mw_2(G)= k$ has neighbourhood complexity
	$$\pi_G(m)\leq k2^{k+2}\cdot m\qquad\text{for all $m\in\N$.}$$
\end{theorem}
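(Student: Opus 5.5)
Since \cref{thm:nbd_complexity} is quoted from Bonamy and Geniet, I only sketch how I would reprove it; I have not checked that the charging closes with exactly the constant $k2^{k+2}$. The plan is to fix $A\subseteq V(G)$ with $|A|\le m$ and a set $B$ containing one vertex for each distinct trace $N_G(v)\cap A$. I would then bound $|B|\le k2^{k+2}m$ by following a merge sequence of width $k$.

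\emph{Reductions.} Radius-$2$ merge-width does not increase under induced subgraphs: restrict every $\P_t$ to the subset and intersect every $R_t$ with the pairs inside it. The traces on $A$ do not change, so we may assume $V(G)=A\cup B$. We may also assume that the partitions form a maximal chain, as explained after the definition, so each step merges exactly two parts.

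\emph{Key observation.} Suppose $u,v\in B$ first lie in a common part at step $t+1$. Their traces differ, so some $a\in A$ is adjacent to exactly one of them. The part of $\P_{t+1}$ containing $u,v$ is homogeneous to the part containing $a$ modulo $R_{t+1}$. Hence $ua\in R_{t+1}$ or $va\in R_{t+1}$, so $a\in\Ball^1_{R_{t+1}}(u)\cup\Ball^1_{R_{t+1}}(v)$. More generally, fix a part $P\in\P_t$ and $v\in P$. For each $a\in A$ with $va\notin R_t$, the adjacency of $v$ to $a$ depends only on the pair of parts containing $v$ and $a$. So the traces of the vertices of $P$ agree outside the $R_t$-neighbourhoods of the vertices of $P$.

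\emph{Charging.} Each time two parts merge, I would charge the traces that are newly brought together to vertices of $A$ near the merged part in $R_{t+1}$. This is where radius $2$ is used. Take a vertex $x$ of the new part that is adjacent in $R_{t+1}$ to a distinguishing $a$. Every vertex within $R_{t+1}$-distance $2$ of $x$ lies in one of at most $k$ parts of $\P_t$. The radius-$1$ information around each such $a$ (which neighbours of $a$ are resolved) and the at most $k$ parts around it together give at most about $2^{k}$ distinguishable patterns per $a$ and per relevant part. Summing over $a\in A$ and over the at most $k$ parts would give $|B|\le k\cdot 2^{k+2}\cdot m$. The extra factor $4$ would absorb the two endpoints of a merge and the adjacent/non-adjacent choice.

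\emph{Main obstacle.} The hard step is making the charging injective: each $a\in A$ must receive only $O(k2^k)$ charges over the whole sequence, not $O(k2^k)$ per step. My first attempt is to charge a trace only at the first step where it becomes indistinguishable, at the level of parts, from an earlier representative. I would then argue that after that step $a$'s resolved pairs cannot be reused for new traces without the $R$-balls exceeding $k$ parts. This uses monotonicity of the $R_t$ and the index shift in $(\P_t,R_{t+1})$.
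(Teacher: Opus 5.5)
The paper does not prove this statement; it cites Bonamy and Geniet. So your sketch has to stand on its own, and as written it does not. Your reductions and your key observation are correct. However, no charging rule is ever specified, the count ``about $2^k$ patterns per $a$ and per part'' is not derived from anything, and the multiplicity bound you defer to the last paragraph is the whole content of the theorem.

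Worse, the mechanism you suggest for that bound is false. A vertex of $A$ can keep acquiring resolved pairs to vertices with pairwise distinct traces while every radius-$2$ ball stays small, because those vertices can sit in one part. Take $A=\{a,a_1,\dots,a_n\}$ and $B=\{w_1,\dots,w_n\}$, with $w_i$ adjacent to $a_i$ for every $i$ and to $a$ for even $i$. First merge $a_1,\dots,a_n$ into one part, resolving every pair $w_ia_i$. Then absorb $w_1,w_2,\dots$ one at a time into a single part, resolving $w_ja$ whenever an odd $w_j$ is absorbed. Every radius-$2$ ball of every $(\P_t,R_{t+1})$ meets at most $4$ parts. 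Yet $a$ ends up $R$-adjacent to $n/2$ vertices with distinct traces. At each absorption of an odd $w_j$, the vertex $a$ distinguishes $w_j$ from $w_2$ and is $R_{t+1}$-adjacent to $w_j$. So a rule of the kind you describe may charge $a$ about $n/2$ times, and no bound of the form ``$O(2^k)$ patterns per $a$'' holds at late times. Since $R$ only grows and parts only coarsen, the per-step bound of $k$ parts gives no control when summed over the sequence.

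The missing idea is to induct on $|A|$ and look at a single, well-chosen moment of the merge sequence. For $a\in A$, the number of traces on $A$ equals the number of traces on $A\setminus\{a\}$ plus the number of pairs of traces differing exactly at $a$. So it suffices to find some $a\in A$ with $O(k2^k)$ such pairs.

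Let $t$ be the first index for which two vertices $a,a'\in A$ share a part of $\P_t$. Then every part of $\P_{t-1}$ contains at most one vertex of $A$. Suppose representatives $y,y'$ have traces differing exactly at $a$. They agree on $a'$, so one of them is adjacent to exactly one of $a,a'$. Since $a,a'$ lie in a common part of $\P_t$, homogeneity modulo $R_t$ puts that vertex in $N_{R_t}(a)\cup N_{R_t}(a')$. The only exceptions are at most two degenerate pairs in which a representative is $a$ or $a'$ itself.

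This is where radius $2$ enters. For $y\in N_{R_t}(a)$ we have $N_{R_t}(y)\subseteq\Ball^2_{R_t}(a)$. This ball meets at most $k$ parts of $\P_{t-1}$, hence contains at most $k$ vertices of $A$. On every other vertex of $A$, the adjacency of $y$ is dictated by its part in $\P_{t-1}$, by homogeneity modulo $R_{t-1}\subseteq R_t$. That part is one of the same $k$ parts. So at most $k2^k$ distinct traces occur in $N_{R_t}(a)$, and likewise for $a'$. This gives at most $k2^{k+1}+2$ pairs differing exactly at $a$. Induction starting from $\pi\le 2$ for $|A|=1$ then yields $\pi_G(m)\le k2^{k+2}m$.
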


Toruńczyk~{\cite[Thm.~5.24]{flip-width}} showed that the VC-dimension of a graph $G$ is linear in its radius-$1$ flip-width. 
The proof can be easily adapted for radius-$1$ merge-width (see also~\cite[Lemma~7.20]{merge-width}).
\begin{theorem}\label{thm:vc_mw1}
	Any graph $G$ satisfies $\VC(G) \le \O(\mw_1(G))$.
\end{theorem}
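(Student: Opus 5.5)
Suppose $A\subseteq V(G)$ is shattered, with $|A|=d$. For each $S\subseteq A$ fix a witness $v_S$ with $N_G(v_S)\cap A=S$. Fix a merge sequence $(\P_1,R_1),\dots,(\P_m,R_m)$ of radius-$1$ width $k=\mw_1(G)$, whose partitions form a maximal chain. The goal is to find a time $t$ and a vertex $w$ whose radius-$1$ ball in $R_{t+1}$ meets $\Omega(d)-O(\log k)$ parts of $\P_t$. That forces $d\le O(k)$.

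\textbf{Choosing the time.} Let $t$ be the largest index such that the vertices of $A$ lie in pairwise distinct parts of $\P_t$. Since $\P_1$ consists of singletons and $\P_m$ has one part, $t<m$ exists (assume $d\ge 2$). Because the chain is maximal, $\P_{t+1}$ merges exactly two parts, and these contain two vertices $a_1,a_2\in A$, now in a common part of $\P_{t+1}$. By homogeneity of $\P_{t+1}$ modulo $R_{t+1}$, every non-resolved pair between this part and a given vertex has the same adjacency. For every $S$ with $a_1\in S$ and $a_2\notin S$, the witness $v_S$ sees $a_1$ but not $a_2$, so one of $v_Sa_1$, $v_Sa_2$ lies in $R_{t+1}$. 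There are $2^{d-2}$ such $S$, so, say, $a_1$ is $R_{t+1}$-adjacent to a set $W$ of at least $2^{d-3}$ witnesses. These have pairwise distinct traces on $A$. All of $W$ lies in $\Ball^1_{R_{t+1}}(a_1)$, so $W$ meets at most $k$ parts of $\P_t$. Hence some part $Q\in\P_t$ contains a set $W_Q\subseteq W$ of $M\ge 2^{d-3}/k$ witnesses, all with distinct traces on $A$.

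\textbf{Forcing many resolved pairs.} Fix $a\in A$ and let $P_a$ be its part in $\P_t$. By homogeneity of $\P_t$ modulo $R_t$, all pairs between $Q$ and $P_a$ outside $R_t$ have the same adjacency. So if $W_Q$ is split by $a$ (some members adjacent to $a$, some not), then every member of one entire side is $R_t$-adjacent to $a$. In particular, at least the minority count $\mu_a=\min(|W_Q\cap N(a)|,|W_Q\setminus N(a)|)$ of members of $W_Q$ are $R_t$-adjacent to $a$.

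The edge-isoperimetric inequality for the hypercube gives the following: for $M$ distinct binary vectors, the number of pairs at Hamming distance $1$ is at most $\tfrac12 M\log_2 M$. Dually, I would use $\sum_a \mu_a \ge \tfrac12 M\log_2 M$, or at least $\sum_a\mu_a\ge c\,M\log_2 M$ for a constant $c$; this is the step to double-check. If the minority-sum form is problematic, I would instead first pass to a subcube-like subfamily via Sauer--Shelah-type arguments. Averaging over $w\in W_Q$ then gives a vertex $w\in Q$ that is $R_t$-adjacent, hence $R_{t+1}$-adjacent, to at least $c\log_2 M\ge c(d-3-\log_2 k)$ vertices of $A$.

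\textbf{Concluding.} These vertices of $A$ lie in pairwise distinct parts of $\P_t$ by the choice of $t$. Therefore $|\P_t\restriction_{\Ball^1_{R_{t+1}}(w)}|\ge c(d-3-\log_2 k)$, and since this is at most $k$ we get $d\le k/c+\log_2 k+3=\O(k)$. The main obstacle is the isoperimetric averaging step, namely turning ``$M$ distinct traces inside one part'' into ``one vertex resolved against $\Omega(\log M)$ elements of $A$.'' The choice of $t$ and the pigeonholing are routine bookkeeping with the shifted indices $(\P_t,R_{t+1})$.
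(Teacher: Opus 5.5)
Your setup is sound. The choice of $t$, the two merged elements $a_1,a_2\in A$, the $2^{d-3}$ witnesses $R_{t+1}$-adjacent to $a_1$, and the pigeonhole into a part $Q\in\P_t$ with $|W_Q|\ge 2^{d-3}/k$ all go through, after discarding the at most $d$ witnesses that lie in $A$, which is harmless. The gap is exactly the step you flagged, and it is not just unverified but false. No constant $c>0$ gives $\sum_a\mu_a\ge cM\log_2 M$ for arbitrary families of $M$ distinct vectors. Take $F=\{0,e_1,\dots,e_n\}\subseteq\{0,1\}^n$: every coordinate has $\mu_a=1$, so $\sum_a\mu_a=M-1$. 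The average number of minority coordinates is then below $1$, while $\log_2 M\to\infty$. More generally, a Hamming ball of radius $r$ has $\sum_a\mu_a\le rM$ against $\log_2 M\ge r\log_2(n/r)$.

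Edge-isoperimetry cannot be ``dualised'' here. It bounds the number of Hamming-distance-one pairs inside $F$ from \emph{above}. Those pairs bound $\sum_a\mu_a$ only from \emph{below}, since the coordinate-$a$ pairs form a matching and so number at most $\mu_a$. Nothing follows. Knowing only $\log_2 M\ge d-3-\log_2 k$ is therefore not enough. Any repair, including your unspecified Sauer--Shelah fallback, has to use that $W_Q$ has density at least $1/(2k)$ in the $(d-2)$-dimensional subcube $\{S : a_1\in S,\ a_2\notin S\}$.

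The cleanest repair is already in your homogeneity step; you weaken it too early by passing to the minority. The side of $W_Q$ that is $R_t$-adjacent to $a$ is determined by $(Q,P_a)$ alone. Let $\epsilon_a\in\{0,1\}$ be the common adjacency of the pairs in $QP_a\setminus R_t$, chosen arbitrarily if this set is empty. Then every $w\in W_Q$ whose adjacency to $a$ differs from $\epsilon_a$ has $wa\in R_t\subseteq R_{t+1}$. The elements of $A$ lie in distinct parts of $\P_t$, and $\Ball^1_{R_{t+1}}(w)$ meets at most $k$ parts. So every trace $N_G(w)\cap A$, $w\in W_Q$, differs from the fixed vector $(\epsilon_a)_{a\in A}$ in at most $k$ coordinates. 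Hence, for $k\le d$, $2^{d-3}/k\le|W_Q|\le\sum_{i\le k}\binom{d}{i}\le(ed/k)^k$, which yields $d=\O(k)$ with no averaging. Hamming balls, the very families that defeat your inequality, are exactly what homogeneity forces here; what closes the argument is the comparison with the lower bound on $|W_Q|$.

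Alternatively, you can save the averaging via density. Apply subadditivity of entropy to a uniformly random element of $W_Q$, with $h$ the binary entropy function. This shows that all but $\O(1+\log k)$ coordinates have minority fraction at least $h^{-1}(1/2)\approx 0.11$. So some $w$ is resolved against $\Omega(d)-\O(\log k)$ elements of $A$.
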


\paragraph*{Duality} Our main result makes use of the following combinatorial notion.
For a graph $G$ and subsets of vertices $X, Y \subseteq V(G)$, a set $S_{XY}$ is \emph{dual} for $(X,Y)$ if:
\begin{itemize}
	\item $S_{XY} \subseteq X$ and $S_{XY}$ dominates $Y$, i.e. $Y\subseteq N_G(S_{XY})$, or
	\item $S_{XY} \subseteq Y$ and $S_{XY}$ anti-dominates $X$, i.e. $X\subseteq N_{\overline{G}}(S_{XY}).$
\end{itemize}
A graph $G$ is said to have \emph{a duality of order} $d\in \mathbb{N}$ if every pair of subsets of $V(G)$ has a dual of size at most $d$.
The following, together with \cref{thm:vc_mw1} shows that the duality of a graph $G$ is linearly upper bounded by $\mw_1(G)$.
\begin{theorem}[{\cite[Thm.~E.2]{flip-width}}]\label{thm:dual_vc}
	Any graph $G$ has a duality of order $\O(\VC(G))$.
\end{theorem}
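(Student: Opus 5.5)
The plan is to derive the statement from LP duality combined with the $\varepsilon$-net theorem of Haussler and Welzl. We only use that $\VC(G)=d$ bounds the VC-dimension of every set system built from traces of neighbourhoods, and of their complements. Fix $X,Y\subseteq V(G)$ and consider the covering LP: minimise $\sum_{x\in X} w_x$ subject to $w\ge 0$ and $\sum_{x\in X\cap N_G(y)} w_x\ge 1$ for every $y\in Y$. Denote its value by $\tau^*$. Its dual is the packing LP: maximise $\sum_{y\in Y} u_y$ subject to $u\ge 0$ and $\sum_{y\in Y\cap N_G(x)} u_y\le 1$ for every $x\in X$. Fix a threshold, say $3$, and split into two cases according to whether $\tau^*\le 3$ or $\tau^*>3$.

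\textbf{Case $\tau^*\le 3$.} Normalise $w$ to a probability measure $\mu$ on $X$. Every $y\in Y$ then satisfies $\mu(N_G(y)\cap X)\ge 1/3$. The ranges $\setof{N_G(y)\cap X}{y\in V(G)}$ are traces of neighbourhoods on $X$, so this set system has VC-dimension at most $d$. Here we use that the graph is symmetric, so we do not need the usual exponential blow-up for dual set systems. The $\varepsilon$-net theorem with $\varepsilon=1/3$ gives a set $S\subseteq X$ of size $\O(d\log(1/\varepsilon)/\varepsilon)=\O(d)$ meeting every range of measure at least $1/3$. In particular $S$ meets $N_G(y)$ for every $y\in Y$, so $Y\subseteq N_G(S)$ and $S$ is a dual of the first kind.

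\textbf{Case $\tau^*>3$.} By LP duality there is a feasible $u$ with $\sum u_y>3$. Normalise it to a probability measure $\nu$ on $Y$. Every $x\in X$ then has $\nu(N_G(x)\cap Y)<1/3$, hence $\nu(Y\setminus N_G(x))>2/3$. Complementing ranges preserves VC-dimension, so the system $\setof{Y\setminus N_G(x)}{x}$ still has VC-dimension at most $d$. An $\varepsilon$-net $S\subseteq Y$ for it has size $\O(d)$ and contains, for each $x\in X$, some vertex $s\in S$ non-adjacent to $x$.

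The main obstacle is that in this second case the chosen $s$ might be $x$ itself, whereas anti-domination requires $x\in N_{\overline G}(S)$, i.e.\ a non-neighbour distinct from $x$. The same issue, with the roles reversed, does not arise in the first case, because $x\notin N_G(x)$. To handle it I would work with the sets $Y\setminus(N_G(x)\cup\{x\})$. Call a vertex $x$ an \emph{atom} if $\nu(\{x\})>1/3$; there are at most two atoms. For a non-atom $x$ we have $\nu\bigl(Y\setminus(N_G(x)\cup\{x\})\bigr)>1/3$. These ranges are Boolean combinations of a neighbourhood trace and a singleton, so their VC-dimension is still $\O(d)$. An $\varepsilon$-net with $\varepsilon=1/3$ therefore anti-dominates all non-atoms with $\O(d)$ vertices. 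For each of the at most two atoms $a$, we add one vertex of $Y\setminus(N_G(a)\cup\{a\})$. Such a vertex exists because $\nu(N_G(a)\cap Y)<1/3$ and $\nu(\{a\})\le 1$; in the degenerate situation where $\nu$ is concentrated on atoms, one argues directly. The resulting set has size $\O(d)$ and anti-dominates $X$. In both cases we obtain a dual of size $\O(\VC(G))$, which proves the statement.
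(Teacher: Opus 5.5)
\textbf{Gap in the atom step.} Your two-case argument, LP duality plus the $\varepsilon$-net theorem, is the standard route to this statement; the paper itself only cites it. Case~1 and the first half of Case~2 are fine. The gap is the atom step. The sentence ``such a vertex exists because $\nu(N_G(a)\cap Y)<1/3$ and $\nu(\{a\})\le 1$'' does not follow. These bounds only give $\nu\bigl(Y\setminus N_G[a]\bigr)>2/3-\nu(\{a\})$, which is nonpositive once $\nu(\{a\})\ge 2/3$, and then $Y\setminus N_G[a]$ can be empty. The ``degenerate situation'' also cannot be handled directly. Take $X=Y=\{a\}$:
\begin{itemize}
\item the covering LP is infeasible, so you are in Case~2;
\item $\nu$ is the Dirac measure at $a$;
\item no $S\subseteq\{a\}$ satisfies $\{a\}\subseteq N_G(S)$ or $\{a\}\subseteq N_{\overline G}(S)$, because $a$ is neither its own neighbour nor its own non-neighbour in the loopless graphs $G$ and $\overline G$.
\end{itemize}
So the version you are proving, with domination through open neighbourhoods and \emph{strict} anti-domination, is false. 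No patch of the atom step can succeed. This is a mismatch of conventions in the literal definitions, not a defect of your method.

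\textbf{How to repair it.} The statement must be read so that the vertex itself is treated consistently. Either of the following works.
\begin{itemize}
\item[(a)] Anti-domination means every $x\in X$ has some $s\in S$ with $xs\notin E(G)$, allowing $s=x$. Then your first $\varepsilon$-net in Case~2 already finishes the proof, and atoms never arise.
\item[(b)] Domination uses closed neighbourhoods, $Y\subseteq N_G[S]$. Set up both LPs with $N_G[\cdot]$. In Case~2 every $x\in X$ then satisfies $\nu\bigl(Y\setminus N_G[x]\bigr)>2/3$ directly. Traces of closed neighbourhoods differ from traces of open ones by at most one point, so the ranges $Y\setminus N_G[x]$ still have VC-dimension $\O(d)$, and a single $\varepsilon$-net gives strict anti-domination.
\end{itemize}
Version~(b) is exactly what \cref{lem:mw_ball} needs. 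There, domination of $X$ is ruled out via $w^*(N_G[v])\le 1$, which works equally well for closed neighbourhoods, and strict anti-domination is then used.
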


\paragraph*{Linear Program formulation} The domination number and the \ii number can be defined as optima of the following linear programs: \\
\begin{minipage}{0.45\textwidth}
\begin{align*}
   % & \MDS\\
  & \text{minimize} \sum_{x \in V(G)} w(x) \\
  & \text{s.t.} \sum_{y \in N[x]} w(y)\geq 1,  \forall x \in V(G) \\
  &  w(x) \in \{0,1\}, \quad \forall x \in V(G)
\end{align*}
\end{minipage}%  
\hfill
\begin{minipage}{0.45\textwidth}
\begin{align*}
    %& \MIIS\\
  & \text{maximize } \sum_{x \in V(G)} w(x) \\
  & \text{s.t.} \sum_{y \in N[x]} w(y)\leq 1,  \forall x \in V(G) \\
  &  w(x) \in \{0,1\}, \quad \forall x \in V(G)
\end{align*}
\end{minipage}%
\medskip\\
The relaxations of these linear programs are obtained by replacing the constraints $w(x)\in\set{0,1}$ by $w(x)\in [0,1]$.
The optima of the relaxations are called \emph{fractional domination number} and \emph{fractional \ii number}, denoted by $\gamma^*(G)$ and $\alpha_2^*(G)$, respectively.
Since the relaxed linear programs are duals, it follows that $\gamma^*(G)=\alpha_2^*(G)$.

For \MDS{} we will use a result of Chan et al.~\cite[Thm.~1.1]{CGKS12}
(see also \cite[Thm. 1]{kupavskii}).
They show a strong approximation result for instances of set covers with small \emph{shallow cell complexity} (SCC). The latter can be upper bounded in terms of the shatter function of the set system, and adapting their result to our setting, we get the following.

\begin{theorem}[\cite{CGKS12}]\label{thm:approx_mds} 
	Fix a non-decreasing function $f\from \N\to \N_{+}$. 
    \MDS{} admits a randomized polynomial time $\O\left(\log f(m)\right)$-approximation algorithm for graphs $G$ with $\pi_G(m)\leq m\cdot f(m)$ for all $m\in\N$. Moreover $$\gamma(G) \le \O\Big(\log f(m)\cdot \gamma^*(G)\Big).$$
\end{theorem}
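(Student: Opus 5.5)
The plan is to recast \MDS{} as a \textsc{Set Cover} instance and check that its shallow cell complexity is controlled by $\pi_G$. Then \cite[Thm.~1.1]{CGKS12} applies directly. Its rounding is relative to the LP optimum, so it yields both the algorithm and the integrality-gap bound.

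\textbf{The instance.} The universe is $V(G)$ and the family is $\setof{N[v]}{v\in V(G)}$, with $N[v]=N_G(v)\cup\set{v}$ the closed neighbourhood. A set $D$ is a dominating set exactly when $\setof{N[v]}{v\in D}$ covers $V(G)$. The standard covering LP of this instance is precisely the relaxed LP for $\gamma^*(G)$ written above.

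\textbf{Bounding the cell complexity.} Fix a subfamily $\setof{N[a]}{a\in A}$ with $|A|=m$. A cell is an equivalence class of universe elements $u$ under the relation ``$u$ lies in the same sets of the subfamily''. Since $u\in N[a]$ iff $a\in N[u]$, the cell of $u$ is determined by the trace $N[u]\cap A$.
\begin{itemize}
\item For $u\notin A$ we have $N[u]\cap A=N_G(u)\cap A$, so these give at most $\pi_G(m)$ distinct traces.
\item The vertices $u\in A$ contribute at most $m$ further traces.
\end{itemize}
Hence the number of cells, regardless of depth, is at most $\pi_G(m)+m\le m f(m)+m\le 2m\,f(m)$, using $f\ge 1$. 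In particular, the shallow cell complexity in the sense of Chan et al.\ (cells of depth at most $k$, for every $k$) is bounded by $m\cdot\varphi(m)$ with $\varphi(m)=2f(m)$. This holds with no dependence on $k$, which is even stronger than what is required.

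\textbf{Invoking the theorem.} The quasi-uniform sampling / randomized rounding of \cite[Thm.~1.1]{CGKS12} (see also \cite{kupavskii}) takes an optimal fractional solution $w^*$. It outputs, in randomized polynomial time, a cover of expected size $\O(\log \varphi(n))\cdot\sum_x w^*(x)$, where $n=|V(G)|$ is the number of sets. We then use $\log\varphi(n)=\log 2+\log f(n)=\O(\log f(n))$, with the additive constant absorbed because $f\ge 1$; the extreme case $f\equiv 1$ falls under the constant-factor regime of the same theorem. This gives the $\O(\log f(m))$-approximation, read with $m=n$; monotonicity of $f$ lets us evaluate $\varphi$ at the largest relevant size. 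Since an integral solution of at most the expected size must exist, we also obtain $\gamma(G)\le\O(\log f(m)\cdot\gamma^*(G))$.

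\textbf{Main obstacle.} I expect the only delicate step to be the translation between the two settings. On one side, Chan et al.\ state shallow cell complexity for set systems, with exact quantifiers over the depth $k$. On the other, the shatter function here is defined with open neighbourhoods while domination uses closed ones. The argument above handles both: the open/closed discrepancy costs only the additive $m$, and our bound holds uniformly in $k$.
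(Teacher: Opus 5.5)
Your proposal is correct and follows the route the paper indicates. You encode \MDS{} as set cover over closed neighbourhoods, bound the shallow cell complexity by $\pi_G(m)+m\le 2m f(m)$ (the additive $m$ accounting for closed versus open neighbourhoods), and apply \cite[Thm.~1.1]{CGKS12}, whose LP-relative guarantee also gives the bound on $\gamma(G)$ in terms of $\gamma^*(G)$. Reading the factor as $\O(\max\{1,\log f\})$, as you do, is the right interpretation.
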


Combining \cref{thm:approx_mds}, which states that classes of bounded radius-$2$ merge-width have linear neighborhood complexity~\cite{bonamy2025chiboundedness}, with \cref{thm:nbd_complexity}, we get that in such classes \MDS{} has bounded integrality gap.

\begin{theorem}[follows from \cite{bonamy2025chiboundedness,CGKS12}] \label{thm:mds_gap}
For any graph $G$ with $\mw_2(G)\leq k$, we have
$$\gamma(G) \le \O(k\cdot \gamma^*(G)).$$
\end{theorem}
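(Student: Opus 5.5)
This follows by chaining \cref{thm:nbd_complexity} with \cref{thm:approx_mds}, taking $f$ to be a constant function. Fix $G$ with $\mw_2(G)\le k$. By \cref{thm:nbd_complexity}, for every $m\in\N$ we have $\pi_G(m)\le k2^{k+2}\cdot m$. Define the constant function $f(m)\coloneqq k2^{k+2}$ for all $m\in\N$. It is non-decreasing, takes values in $\N_+$ (we may assume $k\ge 1$, otherwise $G$ has at most one vertex and there is nothing to prove), and satisfies $\pi_G(m)\le m\cdot f(m)$ for all $m$. So the hypothesis of \cref{thm:approx_mds} holds.

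\cref{thm:approx_mds} then gives
$$\gamma(G)\le \O\big(\log f(m)\cdot\gamma^*(G)\big)=\O\big(\log(k2^{k+2})\cdot\gamma^*(G)\big).$$
Since $\log(k2^{k+2}) = (k+2)\log 2+\log k=\O(k)$, this yields $\gamma(G)\le\O(k\cdot\gamma^*(G))$, as claimed.

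The only point needing care is that the hidden constant in \cref{thm:approx_mds} is absolute, i.e. independent of $f$ and $G$, so that the dependence on $k$ sits entirely in the $\log f$ factor. This is how the result of Chan et al.\ is stated: the approximation factor is $\O(\log \varphi)$ in terms of the shallow cell complexity parameter $\varphi$. We also need that the shatter-function bound transfers to a bound on shallow cell complexity for the closed-neighbourhood set system of the domination instance. That transfer is already absorbed into \cref{thm:approx_mds} as stated here, so we take it as given.

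Beyond this, the argument is a direct substitution and involves no real obstacle.
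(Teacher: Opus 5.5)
Your proposal is correct and is exactly the paper's argument: plug the linear bound $\pi_G(m)\le k2^{k+2}\cdot m$ from \cref{thm:nbd_complexity} into \cref{thm:approx_mds} with the constant function $f\equiv k2^{k+2}$, and observe that $\log(k2^{k+2})=\O(k)$. Your remark that the hidden constant in \cref{thm:approx_mds} must be absolute, so that all dependence on $k$ sits in the $\log f$ factor, is the right point to flag, and the paper relies on it implicitly too.
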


\section{Approximation of \MIIS{}} \label{sec:approx}

The constant-factor approximation for \MIIS{} is a simple greedy algorithm, iteratively picking vertices whose ball of radius $2$ in $G$ have minimum weight according to a fractional \ii weight function. The crucial part is that these balls have  weight bounded by a function of radius-$2$ merge-width.
Note that for a fractional \ii weight function $w \from V(G) \to [0,1]$  of $G$, for every $x\in V(G)$, we have $w(N[x]) \leq 1$. This function is optimal if $w^*(V(G))= \alpha_2^*(G)$. 
% For a graph $G$, we say that a function $w \from V(G) \to [0,1]$ is \emph{fractional \ii} of $G$ if for every $x\in V(G)$, we have $w(N[x]) \leq 1$. Such a function is optimal if $w^*(V(G))= \alpha_2^*(G)$. 

\begin{lemma}\label{lem:mw_ball}
    For any graph $G$ of radius-$2$ merge-width $k$ and duality of order $d$, for any fractional \ii weight function $w^*$ of $G$, there is a vertex $x\in V(G)$ such that $$w^*(\Ball^2_G(x)) = O(kd^2).$$
\end{lemma}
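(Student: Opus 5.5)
The plan is to use the merge sequence to find a set of vertices that is heavy but not too heavy, and then use duality twice to find a vertex of that set whose radius-$2$ ball in $G$ is light. Let $w=w^*$ and recall that $w(N[u])\le 1$ for every $u$. Fix a merge sequence $(\P_1,R_1),\dots,(\P_m,R_m)$ of radius-$2$ width $k$, taken to be a maximal chain. Fix a threshold $\tau$, to be chosen of order $d$, and assume $w(V(G))\ge\tau$; otherwise any vertex works. Let $t$ be the last index such that every part of $\P_t$ has weight $<\tau$. Then $\P_{t+1}$ contains a part $P=A\cup B$ with $A,B\in\P_t$, so $\tau\le w(P)<2\tau$. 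We work with the pair $(\P_t,R_{t+1})$. For any $v$, the ball $\Ball^2_{R_{t+1}}(v)$ meets at most $k$ parts of $\P_t$, so $w(\Ball^2_{R_{t+1}}(v))<k\tau$. This pays for every vertex reached from $v$ within two resolved steps.

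Next, split the $G$-ball. Every vertex of $\Ball^2_G(x)$ is reached by a path $x,u,z$ of length at most $2$. If both edges of the path are in $R_{t+1}$, then $z\in\Ball^2_{R_{t+1}}(x)$, which costs $O(k\tau)$. Otherwise the path uses an unresolved edge $ab$ with $a\in C$ and $b\in D$, for parts $C,D\in\P_t$. By homogeneity of $\P_t$ modulo $R_t\subseteq R_{t+1}$, the pair of parts $C,D$ is then complete modulo $R_{t+1}$. So the contribution reduces to the weight of "unresolved neighbourhoods of parts". Concretely, let $U_x$ be the set of vertices $z$ reached from $x$ by a path $x,u,z$ in which the edge $uz$ is unresolved. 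Then $w(\Ball^2_G(x))\le w(N[x])+O(k\tau)+w(U_x)$, up to the symmetric case where the first edge $xu$ is unresolved. That case is handled the same way, because $x$ is then unresolved-adjacent to all of $u$'s part minus $R$-neighbours, and that part has weight $<\tau$.

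Duality is used to bound $w(U_x)$ for a well-chosen $x\in P$. Apply duality to the pair $(P,Y)$, where $Y$ is the set of vertices of weight-relevant parts lying at distance exactly $2$ from many vertices of $P$. If some $S\subseteq P$ with $|S|\le d$ dominates $Y$, then $Y\subseteq\bigcup_{s\in S}N(s)$, so $w(Y)\le d$. Otherwise some $S\subseteq Y$ with $|S|\le d$ anti-dominates $P$. Then every vertex of $P$ misses some $s\in S$, and by pigeonhole some $s\in S$ is non-adjacent to a subset $P'\subseteq P$ of weight $\ge w(P)/d\ge\tau/d$. We then apply duality a second time, to $(P',N(s))$ or to the intermediate layer. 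The goal is to show that a part adjacent to $P'$ through unresolved pairs would force $N[u]$, for a suitable $u$, to contain weight greater than $1$ once $\tau\gg d^2$. This contradicts $w(N[u])\le 1$, and so we extract an $x\in P'$ with $w(U_x)=O(d^2)$ times the number of parts involved, which is at most $k$. Choosing $\tau=\Theta(d^2)$ then gives $w(\Ball^2_G(x))=O(kd^2)$.

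The main obstacle is this last step: controlling $U_x$ when a single vertex $x$ may be unresolved-adjacent to many different parts. The bound of $k$ parts applies only to $R$-balls, and not to the unresolved neighbourhood. What I would try first is to show that if $x\in P$ sees, through unresolved edges, parts of total weight $\gg d^2$ at distance $2$, then the double duality produces a vertex $u$ with $w(N[u])>1$. The first application of duality localises the problem to $d$ vertices. The second uses the fact that every part has weight $<\tau$, so heavy weight must be spread over many parts, and these are complete to a common vertex modulo $R$. If this fails, I would modify the choice of $t$ so that heaviness is measured by $w(N[\cdot])$ aggregated over parts rather than by $w(P)$.
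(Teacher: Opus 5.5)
There is a genuine gap, and it sits where the whole difficulty of the lemma lies. Your setup (threshold, last index $t$ with all parts of $\P_t$ light, radius-$2$ $R_{t+1}$-balls costing $<k\tau$) matches the paper. The problem starts with the case split, which is inverted.

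The case you call symmetric and dismiss, where the first edge $xu$ is unresolved, is the hard one. The set $X_B=N_G(x)\setminus N_{R_{t+1}}(x)$ itself weighs at most $w(N_G(x))\le 1$. But it may meet unboundedly many parts. Then so may $N_{R_{t+1}}(X_B)$, and so may the set of vertices complete modulo $R_{t+1}$ to parts meeting $X_B$. The fact that $u$'s part has weight $<\tau$ says nothing about these second-step vertices.

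The case with $xu$ resolved and $uz$ unresolved is the easy one. Let $D$ be $u$'s part. Then $z$ is adjacent to all of $D\cap N_{R_{t+1}}(x)$, since otherwise $z\in N_{R_{t+1}}(N_{R_{t+1}}(x))$. There are at most $k$ such traces. The vertices complete to a nonempty set lie in a single open neighbourhood, so this case contributes at most $k$.

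Your two-round duality/pigeonhole sketch, with $Y$ only loosely defined, is not carried out. Its first application is also oriented the wrong way. The outcome you cannot use ($S\subseteq Y$ anti-dominating $P$) is not excluded by any weight argument.

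The missing idea is a single application of duality to the pair $(X_B,P)$, with the heavy set on the side that must supply the dual set.
\begin{itemize}
\item Choose the threshold so that $w(P)>d$, as the paper does. Since $w(N_G(s))\le 1$ for every $s$, no $d$ vertices dominate $P$. Hence there is $S\subseteq P$ with $|S|\le d$ that anti-dominates $X_B$.
\item Homogeneity turns these non-edges into resolved pairs. Work in $\P_{t+1}$, where $P$ is a part; it is homogeneous modulo $R_{t+1}$. For $u\in X_B$, the pair $xu$ is an unresolved edge between $P$ and $u$'s part. So every unresolved pair between these two parts is an edge, and the non-edge $su$ with $s\in S$ must lie in $R_{t+1}$.
\item Therefore $X_B\subseteq N_{R_{t+1}}(S)$. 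Then $X_B\cup N_{R_{t+1}}(X_B)\subseteq \Ball^2_{R_{t+1}}(S)$ meets at most $kd$ parts of $\P_t$, each of weight $<\tau$.
\item Each remaining second-step vertex is complete to one of at most $kd$ traces of parts of $\P_t$ on $X_B$. This costs at most $kd$.
\end{itemize}
With threshold $d$, this gives $O(kd^2)$ for \emph{every} $x$ in the heavy part. No special choice of $x$, no pigeonhole, and no $\tau=\Theta(d^2)$ is needed.
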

\begin{proof}
Let $G$ be a graph of duality of order $d$ and radius-$2$ merge-width $k$ witnessed by a merge-sequence $(\P_1, R_1), \dots , (\P_n , R_n)$. Moreover, assume w.l.o.g. that $|\P_i| = |\P_{i+1}|+1$. 
Let $w^* : V(G) \to \mathbb{R}$ be any fractional \ii weight function of $G$.
Let $i$ be the smallest index such that there is a part $X\in \P_i$ with $w^*(X)>d$. We will show that for any $x\in X$, we have $w^*(\Ball_G^2(x)) \le \O(kd^2)$, hence proving the lemma.

By the choice of $X$, since $|\P_i| = |\P_{i+1}|+1$ we have $w^*(X) \leq 2d$ and for any other part of $P\in \P_i$, we have $w^*(P) \leq d$. 
Define now the following sets:    
    \begin{align*}
    X_R      &= N_{R_{i+1}}(x)
    &\qquad X_{R,R} &= N_{R_{i+1}}(X_R)
    &\qquad X_{R,B} &= N_G(X_R) \setminus X_{R,R} \\
    X_B      &= N_G(x) \setminus X_R
    &\qquad X_{B,R} &= N_{R_{i+1}}(X_B)
    &\qquad X_{B,B} &= N_G(X_B) \setminus X_{B,R}
    \end{align*}
    Observe that \[\Ball_G^2(x) \subseteq \set x\cup X_R \cup X_B \cup X_{R,R} \cup X_{R,B} \cup X_{B,R} \cup X_{B,B},\] hence it is enough to bound the weight of each of these sets.
    First, since $X_R \cup X_{R,R} \subseteq \Ball_{R_{i+1}}^2(x)$ by construction, these sets intersect at most $k$ parts of $\P_i$ and $w^*(X_R \cup X_{R,R}) \leq(k+1)d$. %O(kd) 
    
    Consider now $X_{R,B}$. Let $u\in X_{R,B}$; then there is some $v \in N_G(u) \cap X_R$. Let $P \in  \P_i\restriction_{X_R}$ be such that $v\in P$. Then $P \subseteq N_G(u)$, as otherwise $u\in N_{R_{i+1}}(P) \subseteq X_{R,R}$. 
    Therefore, any vertex  $u\in X_{R,B}$ is complete to some part of $\P_i\restriction_{X_R}$.
    Moreover for $P,Q\subseteq V(G)$, if $P$ and $Q$ are complete to each other, then $w^*(P)\leq 1$ and $w^*(Q) \leq 1$.
    This implies that $w^*(N_G(P) \cap X_{R,B}) \leq 1 $ for every part $P\in \P_i\restriction_{X_R}$, hence $w^*(X_{R,B}) \leq k$.

    To conclude the proof we will show that there exists $S \subseteq X$  such that $X_B \subseteq N_{R_{i+1}}(S)$ and $|S| \leq d$. 
    This implies that $w^*((X_B \cup X_{B,R}) \setminus X) \leq kd^2$ %O(kd^2)
    since $\Ball_{R_{i+1}}^2(S) \supseteq X_B \cup X_{B,R}$ intersect at most $kd$ parts, each of them having weight at most $d$. Furthermore, as $|\P_i\restriction_{X_B}|\leq kd$, using the same arguments as for $X_{R,B}$, we have  $w^*(X_{B,B}) \leq kd$.
    Recall that $d$ is the order of duality of $G$, that is for $X_B$ and $X$:
    \begin{itemize}
    \item there is a set $S\subset X_B$, $|S|\le d$, such that $X \subseteq N_G(S)$, or 
    \item there is a set $S\subset X$, $|S|\le d$, such that $X_B \subseteq N_{\overline{G}}(S)$.
\end{itemize}
    Since $w^*(N_G(v)) \leq 1, v\in V(G)$ and $w^*(X)> d$, there is no set $S\subseteq V(G)$ of size at most $d$ such that $X \subseteq N_G(S)$.
    Hence there is a set $S \subseteq X$ such that $X_B \subseteq N_{\overline{G}}(S)$ and $|S|\leq d$. 
    Moreover, since $N_{R_{i+1}}(x) \cap X_B = \emptyset$ and $\P_i$ is homogeneous modulo $R_{i+1}$, we have that for $v\in X, u\in X_B$ if $uv\notin E(G)$, then $uv\in R_{i+1}$, hence $X_B \subseteq N_{R_{i+1}}(S)$.
\end{proof}

The previous lemma can be adapted to obtain a bound that depends only on radius-$3$ merge-width, which may be smaller in some instances.

\begin{lemma} \label{lem:mw3_ball}
        For any graph $G$ of radius-$3$ merge-width $k$, for any fractional \ii weight function $w^*$ of $G$, there is a vertex $x\in V(G)$ with $w^*(x)>0$, such that $$w^*(\Ball^2_G(x)) = O(k).$$
\end{lemma}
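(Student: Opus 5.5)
The plan is to reuse the scheme of the proof of \cref{lem:mw_ball}, with two changes. First, the threshold $d$ is replaced by an absolute constant $c$ (say $c=2$). Second, the appeal to duality is replaced by a direct homogeneity argument, paid for by one extra step of $R$-distance. Take a merge sequence $(\P_1,R_1),\dots,(\P_n,R_n)$ of radius-$3$ width $k$ that is a maximal chain. Let $i$ be minimal such that some $X\in\P_i$ has $w^*(X)>c$. Then $w^*(X)\le 2c$ and every other part of $\P_i$ has weight at most $c$. Pick $x\in X$ with $w^*(x)>0$, which is possible since $w^*(X)>0$. Define $X_R,X_B,X_{R,R},X_{R,B},X_{B,R},X_{B,B}$ exactly as before, all with respect to $R_{i+1}$ and $G$.

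The easy terms go as in \cref{lem:mw_ball}. We have $w^*(X_R\cup X_B)\le w^*(N[x])\le 1$. The set $X_{R,R}\subseteq\Ball^2_{R_{i+1}}(x)$ meets at most $k$ parts of $\P_i$, so its weight is $\O(ck)$. For $X_{R,B}$, the earlier argument applies verbatim: every $u\in X_{R,B}$ is complete to some part of $\P_i\restriction_{X_R}$. Any two sets complete to each other each have weight at most $1$, and $|\P_i\restriction_{X_R}|\le k$, so $w^*(X_{R,B})\le k$.

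The replacement for duality concerns $u\in X_B$. Since $ux\in E(G)\setminus R_{i+1}$ and $\P_i$ is homogeneous modulo $R_{i+1}$, every pair between $u$ and $X$ outside $R_{i+1}$ is an edge. Since $w^*(N_G(u))\le 1<c<w^*(X)$, the vertices of $X$ non-adjacent to $u$ carry weight more than $c-1>0$. Each such vertex is $R_{i+1}$-adjacent to $u$. So every $u\in X_B$ has an $R_{i+1}$-neighbour in $X$.

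The goal is to show that $X_B\cup X_{B,R}$ lies in an $R_{i+1}$-ball of radius at most $3$ around one vertex, or around $\O(1)$ vertices. Then it meets $\O(k)$ parts of weight $\le 2c$. Moreover $|\P_i\restriction_{X_B}|=\O(k)$, so the complete-to-a-part argument gives $w^*(X_{B,B})=\O(k)$. Summing yields $w^*(\Ball^2_G(x))=\O(k)$.

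The main obstacle is exactly this anchoring step. Each $u\in X_B$ is $R_{i+1}$-adjacent to a heavy vertex of $X$ that depends on $u$, whereas in \cref{lem:mw_ball} duality supplied a common set $S$ of size $d$. What I would try first is to choose $x$ (or the anchor) more cleverly, exploiting that the non-neighbourhoods in $X$ of different $u\in X_B$ all have weight more than $c-1$. Taking $c$ larger, e.g.\ $c=3$, forces the non-neighbourhoods of any two $u,u'\in X_B$ to intersect, because $X\setminus N(u)$ and $X\setminus N(u')$ each have weight more than $w^*(X)/2$ when $c$ is large compared to $1$. Then $u$ and $u'$ are at $R_{i+1}$-distance $\le 2$ through a common vertex $y\in X$. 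Fixing one $u_0\in X_B$, every $u\in X_B$ is within $R_{i+1}$-distance $2$ of $u_0$, and every vertex of $X_{B,R}$ is within distance $3$. So $X_B\cup X_{B,R}\subseteq\Ball^3_{R_{i+1}}(u_0)$, which meets at most $k$ parts by the radius-$3$ width; this is where radius $3$ is used. It remains to verify the weight estimate making the intersection nonempty, namely $w^*(X)-2>w^*(X)/2$ in the worst case. This forces $c$ slightly larger than first guessed, which only affects constants.
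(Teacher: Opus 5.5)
Your proposal is correct and, by its final paragraph, is essentially the paper's proof: it raises the part-weight threshold to a constant, finds a common non-neighbour $y\in X$ for any two $u,u'\in X_B$, and uses homogeneity to get $uy,u'y\in R_{i+1}$, so that $X_B\cup X_{B,R}\subseteq \Ball^3_{R_{i+1}}(u_0)$; the remaining sets are handled as in \cref{lem:mw_ball}. The one slip is the final weight check: since $w^*(X\cap(N[u]\cup N[u']))\le 2$, a common non-neighbour exists as soon as $w^*(X)>2$, so $c=2$ (the paper's choice) already works, and your condition $w^*(X)-2>w^*(X)/2$ is stronger than needed, though harmless once $c\ge 4$.
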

\begin{proof}
    %In \cref{lem:mw_ball}, if the radius-$3$ merge-width is at most $k$, then we an show that $w^*(\Ball_G^2(x)) \le \O(k)$. 
    The proof follows the same structure as that of \cref{lem:mw_ball}. The main difference here is that we consider the first index such that a part has weight more than $2$ instead of $d$. Let $X$ be the part of weight more than $2$ (and less than $4$). The analysis for $X_R$, $X_{R,R}$ and $X_{R,B}$ is the same.

    The crucial observation is that every vertex in $X_B$ share a common non-neighbour in $X$ as $w^*(X)>2$ and $w^*(N_G[u])\leq 1$.
    As any non-edge between $X_B$ and $X$ is in $R_{i+1}$, it follows that for every  $v \in X_B$,  $X_B \subseteq \Ball_{R_{i+1}}^2(v)$ and furthermore $X_B \cup X_{B,R} \subseteq \Ball_{R_{i+1}}^3(v)$ and hence intersect $k$ parts. The analysis of the weight for every other subset of $\Ball_G^2(x)$ is the same as in the proof of \cref{lem:mw_ball}. Therefore, as every part except $X$ has weight less than $2$, it follows that $w^*(\Ball_G^2(x)) \leq \O(k)$.
\end{proof}

    From \cref{lem:mw_ball,lem:mw3_ball} we can design a greedy constant-factor approximation algorithm for \MIIS{} in graph classes of bounded radius-$2$ merge-width.

    \begin{theorem}\label{thm:miis_approx}
    \MIIS{} admits a  $$\O(\min\{k_2d^2,k_3\})\text{-approximation}$$ for graphs $G$ where $k_2\geq \mw_2(G)$, $k_3\geq \mw_3(G)$ and $d$ is the order of duality of $G$. Moreover, $ \alpha_2^*(G) \leq \O(\min\{k_2d^2,k_3\}\cdot \alpha_2(G)\big).$
    \end{theorem}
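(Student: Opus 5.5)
We plan a standard LP-rounding greedy argument. First, compute an optimal fractional \ii weight function $w^*$ of $G$ by solving the relaxed linear program in polynomial time; its value is $w^*(V(G))=\alpha_2^*(G)\ge \alpha_2(G)$. Set $c \coloneqq \O(\min\{k_2d^2,k_3\})$ to be the bound supplied by \cref{lem:mw_ball} and \cref{lem:mw3_ball}. The algorithm maintains a current graph $H$, initially $G$, and a set $I$, initially empty. While $H$ still has a vertex, it picks a vertex $x\in V(H)$ minimising $w^*(\Ball^2_H(x))$, adds $x$ to $I$, and deletes $\Ball^2_H(x)$ from $H$. We do not need to know $k_2,k_3,d$ or a merge sequence, because the algorithm simply takes the minimising vertex. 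The lemmas are only used in the analysis.

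For correctness, let $x,y\in I$ with $x$ chosen first. All vertices within distance $2$ of $x$ in the current graph were removed at that point. We must still rule out $\dist_G(x,y)\le 2$, since distances in $H$ could be larger than in $G$. If $\dist_G(x,y)\le 2$, then either $xy\in E(G)$, or there is a common neighbour $z$ with $x,y\in N_G[z]$. The natural fix is to work with balls taken in $G$ but restricted to the remaining vertex set, i.e.\ choose $x$ minimising $w^*(\Ball^2_G(x)\cap V(H))$ and delete $\Ball^2_G(x)$. Then independence is immediate.

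For the approximation, the lemmas must then apply to the current situation. For $\Ball^2_G(x)$ with $x$ ranging over $V(H)$, I would apply the lemmas not to $G$ but in the following way. Restrict $w^*$ to $V(H)$ by setting it to $0$ outside $V(H)$; this is still a fractional \ii function of $G$. The key point, and the main obstacle, is that the lemma must produce a vertex $x$ whose weight is positive, so that $x$ lies in $V(H)$ rather than in a deleted region. \cref{lem:mw3_ball} states $w^*(x)>0$ explicitly. In \cref{lem:mw_ball} the vertex $x$ may be any vertex of the part $X$, and $X$ has positive weight, so we can pick $x\in X$ with positive weight. Then $x\in V(H)$ and the lemma gives $w^*(\Ball^2_G(x)\cap V(H))\le c$. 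Merge-width and duality are taken in $G$ itself, so no closure under induced subgraphs is needed.

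Consequently each iteration removes at most $c$ units of the total weight. The loop runs until all positive-weight vertices are gone; vertices of zero weight can simply be discarded, or the loop continues while remaining vertices carry positive weight. Hence $|I|\ge w^*(V(G))/c = \alpha_2^*(G)/c$. This yields both the $c$-approximation, since $\alpha_2(G)\le\alpha_2^*(G)\le c|I|$, and the integrality gap bound $\alpha_2^*(G)\le \O(\min\{k_2d^2,k_3\})\,\alpha_2(G)$. Polynomiality follows from LP solvability and the at most $n$ iterations.
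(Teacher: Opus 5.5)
Your proposal is correct and, once you switch to deleting $\Ball^2_G(x)$ instead of $\Ball^2_H(x)$, it is exactly the paper's argument. The paper likewise fixes an optimal $w^*$, greedily picks a positive-weight vertex minimising the weight of its radius-$2$ ball in $G$, and zeroes out that ball. It then bounds the weight lost per step by \cref{lem:mw_ball,lem:mw3_ball} applied to the current weight function, which remains feasible. Your remark that the proof of \cref{lem:mw_ball} works for every $x\in X$, so $x$ can be taken with positive weight, is precisely what the paper uses implicitly.
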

    % \begin{theorem}\label{thm:miis_approx}
    %     For any graph $G$, \MIIS{} can be $\O(\min\{k_2d^2,k_3\})$-approximated %$\Omega(\max\{\frac{1}{k_2d^2},\frac{1}{k_3}\})$-approximation 
    % , where $d$ is the order of duality of $G$, $k_2=\mw_2(G)$ and $k_3=\mw_3(G)$. Moreover $$ \alpha_2^*(G) \leq \O(\min\{k_2d^2,k_3\}\cdot \alpha_2(G)\big).$$
    % \end{theorem}
    \begin{proof}
    Let $w^*$ % : V(G) \to [0,1]$ 
    be an optimal fractional \ii weight function of $G$ obtained by solving the relaxation of the linear program for $\MIIS$.
    We will construct a \iis $I$ of $G$ by the following procedure:
    while $w^*(V(G))>0$, pick a vertex $x\in V(G)$ with $w^*(x)>0$ such that $w^*(\Ball_G^2(x))$ is minimum; add $x$ to $I$ and set $w^*(y)$ to $0$ for every $y\in \Ball_G^2(x)$.

    First, note that $I$ is a \iis of $G$, indeed, whenever a vertex is picked, its current weight is non-zero, this implies that its distance to vertices previously added to $I$ is at least $3$.
    Moreover, whenever a vertex is added to $I$, the total weight of $G$ is decreased by at most $O(\min\{k_2d^2,k_3\})$ by \cref{lem:mw_ball,lem:mw3_ball}. Hence, $\O(|I|\cdot \min\{k_2d^2,k_3\})  \geq \alpha_2^*$, concluding the proof.
    \end{proof}

\cref{thm:miis_approx} shows that in graph class of bounded radius-$2$ merge-width, integrality gap for \MDS{} is bounded, since in any graph the duality of order is bounded by the radius-$1$ merge-width (\cref{thm:vc_mw1,thm:dual_vc}).
Since $\gamma^*(G) = \alpha_2^*(G)$, it follows from \cref{thm:mds_gap,thm:miis_approx} that the \diratio is bounded in graph of bounded radius-$2$ merge-width.

\begin{corollary}
	For any graph $G$ of duality of order $d$ and with $\mw_2(G)\leq k$, we have $$\gamma(G)  \le \O(k^2d^2 \cdot \alpha_2(G)).$$
\end{corollary}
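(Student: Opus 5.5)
This corollary follows by chaining the two integrality-gap bounds through LP duality; there is essentially no new combinatorial content. I would write
\[
\gamma(G) \le \O\big(k\cdot \gamma^*(G)\big) = \O\big(k\cdot \alpha_2^*(G)\big) \le \O\big(k\cdot k d^2\cdot \alpha_2(G)\big) = \O\big(k^2d^2\cdot \alpha_2(G)\big),
\]
and justify each step in turn.

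The first inequality is exactly \cref{thm:mds_gap}, applied with $\mw_2(G)\le k$. Recall that \cref{thm:mds_gap} comes from \cref{thm:nbd_complexity}: it gives $\pi_G(m)\le k2^{k+2} m$, so $f(m)= k2^{k+2}$ is constant. \cref{thm:approx_mds} then yields a gap of $\O(\log f) = \O(k)$.

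The middle equality $\gamma^*(G)=\alpha_2^*(G)$ is LP duality. The two relaxed programs in the preliminaries are dual to one another, and both are feasible and bounded: $w\equiv 1$ is feasible for the domination relaxation and $w\equiv 0$ is feasible for the packing relaxation. Hence strong duality applies.

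For the last inequality I invoke \cref{thm:miis_approx} with $k_2=k$. Since $\min\{k_2d^2,k_3\}\le k_2d^2 = kd^2$, its \emph{moreover} part gives $\alpha_2^*(G)\le \O(kd^2\cdot \alpha_2(G))$. The radius-$3$ term $k_3$ can simply be discarded, because the minimum is bounded above by either of its arguments.

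The only point needing care is bookkeeping of the hidden constants. I must check that the $\O(\cdot)$ in \cref{thm:mds_gap} and in \cref{thm:miis_approx} are absolute constants, independent of $G$, $k$ and $d$, so that the product is again $\O(k^2d^2)$. This is clear from the explicit bounds in \cref{lem:mw_ball}. One could also remark that $d$ may be replaced by $\O(\mw_1(G))\le \O(k)$ via \cref{thm:vc_mw1,thm:dual_vc}, but the statement as given does not need this.
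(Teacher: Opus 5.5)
Your proposal is correct and matches the paper's argument. The paper likewise chains \cref{thm:mds_gap}, the LP duality $\gamma^*(G)=\alpha_2^*(G)$, and the \emph{moreover} part of \cref{thm:miis_approx} (using the $k_2d^2$ term) to get $\gamma(G)\le \O(k\cdot kd^2\cdot\alpha_2(G))$.
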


The \emph{twin-width} of a graph $G$, denoted by $\tww(G)$, is the minimum integer $k$ such that there exists a maximal chain 
$$\P_1\prec \P_2\prec \dots\prec \P_n$$
of partitions of $V(G)$ (called a \emph{contraction sequence} in this context) such that for every $t\in[n]$, every part $P\in\P_t$ is non-homogeneous to at most $k$ parts of $\P_t\setminus \{P\}$.
We can adapt the approximation algorithm of \cref{thm:miis_approx} to the setting of twin-width and get an improvement of the approximation factor over the one for \MIIS{} of Bonnet et al.~\cite{BGKTW24}. Moreover, their approximation required a contraction sequence given as part of the input.
This also reaches the $\O(\tww(G)^2)$ integrality gap implied by Bonamy et al.~\cite{BCGY25}.

\begin{corollary} \MIIS{} admits a $\O(k^2)$-approximation for graphs $G$ where $\tww(G)\leq k$.
\end{corollary}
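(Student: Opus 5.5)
We will run the same greedy LP-rounding algorithm as in \cref{thm:miis_approx}: compute an optimal fractional \ii weight function $w^*$, and while $w^*(V(G))>0$, pick a vertex $x$ with $w^*(x)>0$ minimising $w^*(\Ball^2_G(x))$, add it to $I$, and zero out $\Ball^2_G(x)$. As before, $I$ is a \iis, and it suffices to show that at every step some vertex $x$ of positive weight has $w^*(\Ball^2_G(x))=\O(k^2)$. The proof of this ball bound should not need a given contraction sequence: the algorithm only uses the existence of such a vertex, since it picks the minimiser itself. This is what makes it genuinely polynomial.

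To prove the ball bound, fix a contraction sequence $\P_1\prec\dots\prec\P_n$ of width at most $k$ for the current graph (with the current, possibly partially zeroed weights; zeroed weights only help). Mimic \cref{lem:mw3_ball}: let $i$ be the first index at which some part $X\in\P_i$ has $w^*(X)>2$. Since the sequence is a maximal chain, $w^*(X)\le 4$, and every other part of $\P_i$ has weight at most $2$. Take any $x\in X$ with $w^*(x)>0$; such a vertex exists since $w^*(X)>0$.

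Now split the second neighbourhood according to the red graph of $\P_i$ (non-homogeneous pairs of parts), which has maximum degree $k$. The vertices of $\Ball^2_G(x)$ lying in $X$ or in parts red to $X$, or at red-distance $2$ from $X$, lie in at most $1+k+k^2$ parts, each of weight at most $4$. This gives $\O(k^2)$. The remaining vertices $u$ are reached through an edge to a part $P$ that is homogeneous (hence complete) to the relevant part. Two cases arise.

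\emph{Case 1: $P$ is homogeneous to $X$ and complete.} Then some vertex of $P$ is complete to $X$, contradicting $w^*(X)>2\ge 1$ together with $w^*(N[v])\le1$. So every neighbour of $x$ outside $X$ lies in a part red to $X$, i.e.\ in one of at most $k$ parts $P_1,\dots,P_k$.

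\emph{Case 2: second-step vertices $u$ adjacent to some $v\in P_j$.} Either $u$'s part is red to $P_j$ (at most $k^2$ parts, weight $\O(k^2)$), or $u$'s part $Q$ is complete to $P_j$. In the latter case $Q\subseteq N(v)$, so $w^*(Q)\le 1$, and all such $Q$ for a fixed $j$ lie in $N(v)$ for a single $v\in P_j$. Their total weight is therefore at most $1$ per $j$, giving $\O(k)$ overall.

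Summing the pieces gives $w^*(\Ball^2_G(x))=\O(k^2)$, and hence $\alpha_2^*(G)\le \O(k^2)\,|I|$. I expect the main subtlety to be Case 2: making sure that vertices reached via a homogeneous complete pair are charged to a single closed neighbourhood, rather than summing over many parts. I would handle this by choosing one witness $v\in P_j$ per part, as described above.
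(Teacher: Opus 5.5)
Your proposal is correct and follows essentially the same argument as the paper. The paper also takes the first step of the contraction sequence where some part $X$ becomes heavy, and bounds $\Ball^2_G(x)$ by the weight of $X$, of its at most $k$ red parts, of their at most $k^2$ red parts, and of the vertices complete to a red part of $X$, charging the last group to one closed neighbourhood per red part. The only differences are cosmetic. The paper uses the threshold $w^*(X)>1$ instead of your $2$; this already suffices, since your Case~1 only needs $w^*(X)>1$. You should also dispose of the trivial case $w^*(V(G))\le 2$, where no heavy part exists and any positive-weight vertex works.
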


\begin{proof}
    Let $\P_1\prec\ldots\prec\P_n$ be a contraction sequence witnessing that $\tww(G)\leq k$.
    Observe that whenever a part $P$ has weight more than one, no vertex of $G$ can be complete to it.
    In particular, all vertices of $G$ fully adjacent to this part have to be in one of the at most $k$ parts non-homogeneous with $P$. It follows that for a step $\P_i$ of a contraction sequence, if only one part $X$ has weight more than one, then $w^*(\Ball_G^2(x)) \le \O(k^2)$ for any $x \in X$, and the conclusion follows from the algorithm defined in \cref{thm:miis_approx}.
    
    More precisely, let $\mathcal{X}_R$ be the set of parts non-homogeneous with $X$; $\mathcal{X}_{R,R}$ the set of parts non-homogeneous with at least one part of $\mathcal{X}_R$; and $X_{R,B}$ is the set of vertices fully adjacent to at least one part of $\mathcal{X}_R$.
    Then it is clear that $\Ball_G^2(x) \subseteq P\cup \bigcup \mathcal{X}_R \cup \bigcup \mathcal{X}_{R,R} \cup X_{R,B} $. Since $|\mathcal{X}_{R,R}| \leq k^2$ and that for each part $\mathcal{X}_R$ the weight of the vertices fully adjacent to it is at most one, it follows that $w^*(\Ball_G^2(x)) \le \O(k^2)$.

\end{proof}

\section{Related problems and open questions} \label{sec:concl}
Natural generalisations of the two problems investigated in this paper are \rMDS{} and \rMIIS{}. In these problems we look respectively for a minimum-size set dominating every vertex at distance at most $r$, and a maximum-size set such that every vertex is at distance at most $2r$ from at most one vertex from this set.     
For $r\in\mathbb{N}$, the $r$-power $G^r$ of a graph $G$ is the graph  obtained by adding an edge between every pair of vertices at distance at most $r$. It is straightforward to see that solving \MDS{} on $G^r$ provides a solution of \rMDS{} on $G$, and similarly for the independent set variant.
    
Moreover, it is known that the radius-$2$ merge-width of $G^r$ depends only on the radius-$f(r)$ merge-width of $G$ for some function $f$~\cite[Theorem 1.12]{merge-width}\footnote{This theorem states that graph classes of bounded merge-width are closed under first-order interpretation. The graph $G^r$ can be interpreted from $G$ by a first-order formula $\varphi(u,v)$ that is satisfied if $\dist_G(u,v) \leq r$.}. 
Therefore, \rMDS{} and \rMIIS{} can be constant-factor approximated and have bounded integrality gap in graphs of bounded radius-$f(r)$ merge-width.

\medskip

It can be asked if the results of this paper can be extended to broader classes of graphs, for example the ones of bounded radius-$1$ merge-width.
Dvor{\'{a}}k~\cite{Dvorak13} %~\cite[Theorem 1.14.]{BCGY25} 
observed that there are graphs that are $3$-degenerate and for which the \diratio is unbounded. 
Since $k$-degenerate graphs have radius-$1$ merge-width at most $k+2$ \cite[Theorem 7.3]{merge-width}, the \diratio is also unbounded in graphs of bounded radius-$1$ merge-width.
In particular, the counter-example used shows that the integrality gap \MIIS{} in $3$-degenerate graphs can be unbounded. While this does not rule out a constant-factor approximation for \MIIS{} in graphs of bounded radius-$1$ merge-width, it seems unlikely. However, this leaves the following
question for \MDS{}.

\begin{question}
Does \MDS{} admit a constant-factor approximation in graphs of bounded radius-$1$ merge-width? Does it have bounded integrality gap?
\end{question}

Chan et al.~\cite{CGKS12} show that \MDS{} admits a constant factor approximation in graphs of linear neighbourhood complexity (\cref{thm:approx_mds}). Our main result extends the approximability at constant factor of \MIIS{} to graphs of bounded radius-$2$ merge-width, we ask if this can be pushed further.

\begin{question}
    Does \MIIS{} admit a constant-factor approximation in graphs of linear neighbourhood complexity?
\end{question}

% \bibliographystyle{plainurl}
% \bibliography{refs}
\printbibliography

%% Loading bibliography style file
%\bibliographystyle{model1-num-names}
% \bibliographystyle{cas-model2-names}

% % Loading bibliography database
% \bibliography{refs}

\end{document}